\newcommand{\be}{\begin{equation}}
\newcommand{\ee}{\end{equation}}
\newcommand{\bear}{\begin{eqnarray}}
\newcommand{\eear}{\end{eqnarray}}
\newcommand{\bears}{\begin{eqnarray*}}
\newcommand{\eears}{\end{eqnarray*}}
\newcommand{\bi}{\begin{itemize}}
\newcommand{\ei}{\end{itemize}}
\newcommand{\ben}{\begin{enumerate}}
\newcommand{\een}{\end{enumerate}}
\newtheorem{theorem}{Theorem}
\newtheorem{lemma}[theorem]{Lemma}
\newtheorem{proposition}[theorem]{Proposition}
\newtheorem{example}[theorem]{Example}
\newtheorem{definition}[theorem]{Definition}
\newtheorem{corollary}[theorem]{Corollary}
\newtheorem{construction}{Construction}
\begin{document}

\title{ Fractional Repetition Codes for Repair \\in Distributed Storage Systems}
 \author{
 \authorblockN{Salim El Rouayheb and  Kannan Ramchandran   \\}
 \authorblockA{Department of Electrical Engineering and Computer Sciences \\
 University of California, Berkeley \\
 \{salim, kannanr\}@eecs.berkeley.edu}}

\maketitle

\begin{abstract}

We introduce a new class of \emph{exact Minimum-Bandwidth Regenerating} (MBR) codes for distributed storage systems,  characterized by a low-complexity \emph{uncoded} repair process that can tolerate multiple node failures. These codes consist of the concatenation of two components: an outer MDS code followed by an inner repetition code. We refer to the inner code as a \emph{Fractional Repetition} code since it consists of splitting the data of  each node  into several packets and storing multiple replicas  of each on different nodes in the system.

Our model for repair is \emph{table-based}, and thus, differs from the random access model adopted in the literature. We present constructions of Fractional Repetition codes  based on \emph{regular graphs} and  \emph{Steiner systems} for a large set of system parameters. The resulting codes are  guaranteed to achieve the storage capacity  for random access repair.  The considered model motivates  a new definition of capacity for distributed storage systems, that we call \emph{Fractional Repetition} capacity.  We provide upper bounds on this capacity  while a precise expression remains an open problem.

\end{abstract}

\section{Introduction}

Despite being formed of unreliable nodes  having a short lifespan, \emph{distributed storage systems} (DSS) are required to store data for  long periods of time with a very high reliability. Typically, nodes in the system will unexpectedly leave  for different reasons such as     hardware failures in data centers, or  peer churning in  peer-to-peer (P2P) systems. To overcome this problem, a two-fold solution can be adopted based on \emph{redundancy} and \emph{repair} \cite{Ocean}. Classical erasure codes can be used to introduce redundancy in the system to  protect the data  from being  lost when nodes fail.  In addition, to maintain a targeted  high reliability, the system is repaired whenever a node fails  by replacing it with  a new  one. 

A distributed storage system is   formed of $n$ storage nodes and  gives  the user 
the flexibility  to recover its stored file by contacting  any $k$ out of the $n$ nodes, for some $k<n$. We call this property the \emph{MDS property} of the DSS in reference to   Maximum Distance Separable (MDS) codes.  When a node fails, the system is repaired by replacing the failed node with a new ``blank'' node. The new node contacts $d$ survivor nodes, downloads encodings of their data and stores it, possibly after processing it.  The data stored on the new node should  maintain the MDS property of the DSS. In analogy with classical codes defined by the pair  $(n,k)$,  a DSS is specified by  the triplet $(n,k,d)$, where the additional parameter $d$, referred to  as the \emph{repair degree}, accounts for the repair requirement. Fig.~\ref{fig:DSS} depicts a  $(4,2,3)$ DSS where node $v_1$ has failed and has been replaced by node $v_1'$ that contacts $d=3$ survivor   nodes in the system to download its data. 

Dimakis et al.\   introduced and studied  in \cite{DGWR07, DGWWR07,  WDR07} the design of erasure codes with efficient repair capabilities, termed \emph{regenerating codes}. The authors showed the existence of a tradeoff between  storage capacity and repair bandwidth in these systems. In this tradeoff, two regimes are of special interest, the minimum-bandwidth regime  and the minimum-storage regime. This original work focused on \emph{functional repair} where the only requirement on the data regenerated at the replacement node is to maintain the MDS property of the system.
Subsequent works focused on the design of  \emph{exact} regenerating codes    that can repair the system by reproducing an exact replica of the lost data. Rashmi et al. presented in \cite{RSKR09}  constructions of  exact minimum-bandwidth regenerating (MBR) codes  for the  case of $d=n-1$, and   for all feasible values of the repair degree $d$ in \cite{RSKKisit}.
  The existence of exact regenerating  codes for the minimum-storage regenerating (MSR) case  was demonstrated in \cite{SR10, CJM10},  and  deterministic constructions  were investigated  in  \cite{RSKR09, WD09, SR09, SRKR10}. 

In this work we are interested in the construction  of  \emph{exact} minimum-bandwidth regenerating (MBR) codes that  are characterized by a  low-complexity   repair process. In this regime,  a replacement node recovers an  exact copy of the lost data by contacting  $d$ survivor  nodes and downloading and storing one packet of data from each.  To guarantee that the constructed codes have low complexity, we require them to satisfy what we call the \emph{uncoded repair} property: a survivor node reads  the exact  amount of data he needs to send to  a replacement node and forwards it  without any processing. Our motivation is that in practical systems the read/write bandwidth of the storage  nodes is the bottleneck since it is  much smaller than the network bandwidth \cite{V09}.  Regenerating codes, such as random network codes \cite{DGWWR07},   do not satisfy the uncoded repair property  in general since they require a survivor node to read all his stored data in order  to send  a linear combination of them   to the replacement node. We show  the surprising result that even with the two apparently restrictive constraints of  exact and uncoded repair, it is possible to construct optimal regenerating codes under a table-based repair model.

 
\begin{figure}[t]
\begin{center}
\includegraphics[width=0.5\columnwidth]{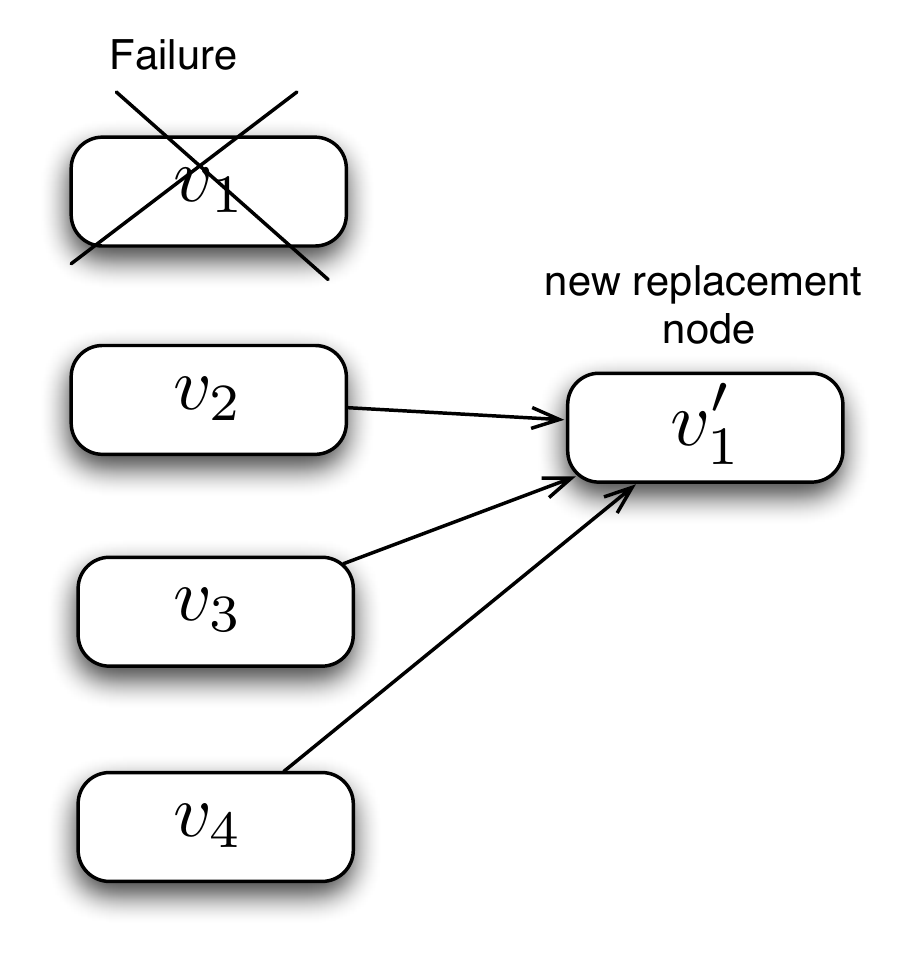}
 \caption{An example of a  distributed storage system with $(n,k,d)=(4,2,3)$.  Initially, the system is formed of $n=4$  nodes, $v_1,\dots,v_4$, storing coded packets of a file. The user contacts any $k=2$ nodes and should be able to decode the stored file. When a node fails, it is replaced by a new one that contacts $d=3$ nodes to download its data. The figure shows an instance where node $v_1$ fails and is replaced by node $v_1'$. }\label{fig:DSS}
\end{center}
\end{figure}

 Our codes are based on a generalization of the construction of  Rashmi et al.\  in \cite{RSKR09} and  are formed by  the concatenation of two component codes (see Fig.~\ref{fig:RashmiCode}): an  outer MDS  code  to  ensure the required MDS property of the DSS, and an  inner  repetition code  characterized by an  efficient uncoded repair process that is resilient to multiple node failures. The design of the inner code represents the challenging task in this construction. We refer to it as \emph{Fractional Repetition (FR)} code since, in our proposed solution,  the data stored on each node is split into $d$ packets, each of which is repeated a certain number of times in the system. We study the design of FR codes that can achieve the DSS capacity for  the MBR  point in \cite{DGWWR07} which assumes  random access repair where the new node can contact any $d$ survivor nodes.  For single failures,  we provide   a construction based on \emph{regular graphs} for all feasible values of system parameters. For the general case of multiple failures, we propose code constructions based on \emph{Steiner systems}. The table-based repair model motivates a new concept of storage capacity for distributed storage systems that we call \emph{Fractional Repetition} (FR) capacity, which we investigate.

  
The rest of the paper is organized as follows.  In Section~\ref{sec:Model}, we define our system model and motivate our code design requirements.  In Section~\ref{sec:Examples}, we give two examples of Fractional Repetition codes that are used for constructing  optimal exact MBR code with uncoded repair.  We describe code constructions based on regular graphs  for the single failure case in Section~\ref{sec:SingleFailure}.  Furthermore, we provide constructions based on Steiner systems  for the multiple failures case in Section~\ref{sec:MultipleFailures}.  In  Section~\ref{sec:FRCapacity}, we define the Fractional Repetition capacity of a DSS and provide some bounds. We conclude in Section~\ref{sec:Conclusion}  with a summary of our results and discuss related open problems.

\section{ Motivation and Model}\label{sec:Model}

%
%
 A distributed storage system DSS is defined by the triplet $(n,k,d)$, where $n$ is the total number of storage nodes in the system, $k<n$ is the number of nodes contacted by the user to retrieve his stored file, and $d\geq k$ is the repair degree that specifies the number of nodes contacted by a replacement node during repair. 
 
 We consider distributed storage systems operating in the minimum-bandwidth regime  on the storage/bandwidth  tradeoff curve described in \cite{DGWWR07}.  Our focus on this regime is motivated by the asymmetrical cost of resources in practical systems where bandwidth is  more expensive  than storage. In this case,  the repair bandwidth of the system, \emph{i.e.}, the total amount of data downloaded by a replacement node  is minimized. As a result,  the  new node   needs to download only the amount of data he will store, but no more.  
 
   For load-balancing requirements, we assume a symmetric model for  repair  where the replacement node downloads and stores an equal amount   of data, referred to as a packet, from each of the $d$ nodes it contacts. Therefore, in the minimum bandwidth regime,  $d$ also represents the node storage capacity expressed in packets.   
   
   Under this model, the storage capacity $C_{MBR}$ in packets of the DSS, representing the information-theoretic limit on the  maximum file size  that can be delivered to any user contacting $k$ out of the $n$ nodes,   was shown in \cite{DGWWR07} to be
   \begin{equation}\label{eq:Capacity}
	C_{MBR}(n,k,d)=kd-\binom{k}{2}.
\end{equation}
 
 This expression assumes a \emph{ functional repair} model where the  only constraint  on the data regenerated (stored) at the new node  is maintaining the MDS property of the system.  This allows the regenerated data  to be different from the lost data as long as it is ``functionally'' equivalent.  However, a more stringent form of repair, known as \emph{exact} repair,   that is  capable of reproducing     an exact copy of the lost data, may be required  for many system considerations such as maintaining a systematic form of the data, reducing protocol overhead and guaranteeing   data security \cite{PRR10, PRR10journal}. 
 In this case, regenerating codes are also  referred to as being exact. Recently, Rashmi et al. showed the interesting result that, in the minimum-bandwidth regime, there  is no loss in the  DSS capacity incurred by requiring exact repair, and constructed  exact  MBR codes that  achieve the capacity $C_{MBR}$ of \eqref{eq:Capacity}.

Existing code constructions suffer in general from a high complexity repair process. 
  A survivor node asked to help in repair typically has to read all his $d$ stored packets and compute a linear combination of them in order to obtain a single packet to be   forwarded to the replacement node. In addition to the computational complexity overhead, the repair process results in long delays at the survivor nodes since in general their read/write bandwidth is much smaller than the network bandwidth \cite{V09}. This motivates us to study  exact MBR codes with fast and  low-complexity  repair    where  a survivor node   reads only one of his stored  packet  and forwards it to  the replacement node with no additional processing.  We refer to this property as \emph{uncoded repair}.

  \section{ Examples}\label{sec:Examples}
  
    Before introducing our general constructions,   we  present two examples of exact regenerating  codes that can achieve the capacity $C_{MBR}$  while satisfying the    uncoded repair property.   The  first example is based on the construction of exact  MBR codes  for $d=n-1$ of Rashmi et al. in \cite{RSKR09}.
 
 %


\begin{figure}[t]
\begin{center}
\includegraphics[width=1\columnwidth]{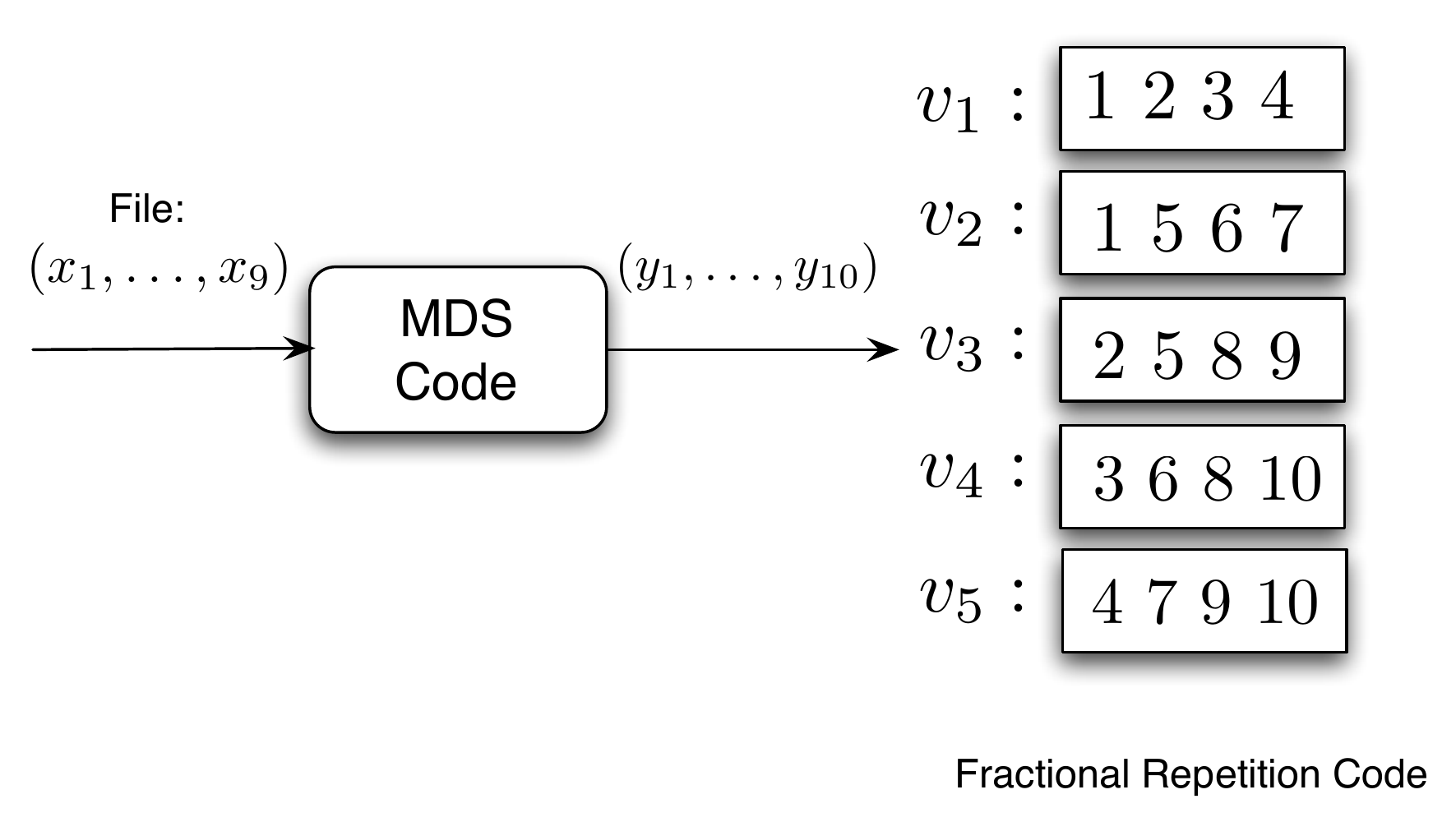}
 \caption{An exact regenerating code for a $(5,3,4)$ DSS  that can achieve the capacity $C_{MBR}=9.$ The code is  formed by a $(10,9)$ parity-check MDS code followed by a special repetition code. 
 The repetition code is defined by  listing  the indices of the coded packets  stored on each node. We refer to  the inner code as a  \emph{Fractional Repetition} code of repetition degree $\rho=2$ since the data on each node is split into $d=4$ packets  where each is repeated twice in the system. The overall code can achieve the MDS property of the DSS along with  exact and  uncoded repair in the case of one failure.}
 \label{fig:RashmiCode}
\end{center}
\end{figure}

\begin{example}\label{ex:RashmiCode}
	Consider a $(5,3,4)$ DSS where the storage capacity is equal to 9 packets according to   \eqref{eq:Capacity}. Let $X=(x_1,\dots,x_9)\in\mathbb{F}_q^9$ denote the file of 9 packets to be stored on the system, where $\mathbb{F}_q$ is the finite field of size $q$.  Figure~\ref{fig:RashmiCode}  depicts an exact MBR code that can achieve the above storage capacity  \cite{RSKR09}. This regenerating code consists  of the concatenation of  two components: an   outer  $(10,9)$ parity-check MDS code  followed by a special repetition code. The  MDS code takes the file $X$ as input and outputs the codeword $Y=(y_1,\dots,y_{10})$, where $y_i=x_i, i=1,\dots,9$, and $y_{10}$ is a parity-check packet, \emph{i.e.},  $y_{10}=\sum_{i=1}^9 x_i.$  The coded packets $y_1,\dots, y_{10}$ are then placed on the 5 storage nodes following the pattern of the inner code in Fig.~\ref{fig:RashmiCode}. That is, nodes $v_1,\dots,v_5$ store, respectively,  $\{y_1,y_2,y_3,y_4\}, \{y_1,y_5,y_6,y_7\}, \{y_2,y_5,y_8,$ $y_9\}, \{y_3,y_6,y_8,y_{10}\}$ and $\{y_4,y_7,y_9,y_{10}\}$.  
	
	A user contacting a node can download all  its stored packets.  Since the repetition code is such that any two nodes share exactly one packet, a user contacting $k=3$ nodes will be able to download $9$  distinct packets out of the $10$ coded ones (12 in total, of which 3 are repeated twice). Thus, due to the MDS property of the outer code, it can recover the whole  file $X$.  Moreover, each of the coded packets is replicated twice in the system on two different nodes  which   guarantees uncoded exact repair  in the case of a single node failure.  Indeed, whenever a  node fails, its data can be recovered exactly by contacting the four surviving nodes and downloading  one  packet from each. For instance,  when node $v_1$ fails, a replacement node contacts  nodes $v_2,\dots,v_5$, and  downloads  packets $y_1,\dots,y_4$ from each, respectively. 
\end{example}

%

The previous code is a special example of the  exact MBR codes devised in \cite{RSKR09} where uncoded repair was not a requirement. This construction, however, is limited to systems with repair degree $d=n-1$ that require contacting all the survivor  nodes when a failure occurs. This may not always be feasible  due, for example, to nodes having  limited access bandwidth or  being temporarily down. Moreover, the uncoded repair process here cannot tolerate multiple failures together, which may not be a rare event in large-scale systems where failures can also be correlated, or systems where repair is not immediate but  performed at prescheduled intervals. For these reasons,  we are interested in  exact MBR codes for systems with small repair degree $d$ that have  an uncoded repair process that  can tolerate multiple failures. 

To introduce our constructions which will be detailed in the following sections, we give a second example  for a  DSS  with $(n,k,d)=(7,3,3)$ having an exact and uncoded repair  process that can tolerate up to two failures. 

\begin{example}\label{ex:Fano}
 Consider a $(7,3,3)$ DSS where the system storage  capacity is  $C_{MBR}=6$ by \eqref{eq:Capacity}. The code that we propose for this system   is also constructed by  concatenating two constituent codes: an  outer $(7,6)$ MDS code that outputs coded packets indexed from $1$ to $7$,  followed by the  repetition code   depicted in Fig.~\ref{fig:FanoCode}(a). It can be seen that each of  the 7 packets  forming the output of the outer code is replicated on $3$ different nodes in the system. Therefore,  the system will  always  have a surviving copy of each packet  in the case of two failures. Thus,  the code guarantees exact uncoded repair  for up to  two node failures.   
   
Next, we check that this code indeed  achieves the capacity $C_{MBR}$. The structure of the inner repetition code is deduced from   the projective plane   of order $2$, also called the Fano plane,  depicted in Fig~\ref{fig:FanoCode}(b) \cite[Chap~2]{Mcwilliams}.  The Fano plane  consists of 7 points indexed from $1$ to $7$  and the following $7$ lines $123, 345, 156, 147, 257, 367$ and $246$, including the circle. To form the inner repetition code, we associate to  each line in the Fano plane  a  distinct storage node in the DSS. Then,  the three points belonging to that line give the indices of the packets stored on the  corresponding node. In the projective plane, any two lines intersect in exactly one point. Therefore, any two nodes will have exactly one packet in common.  This implies that any user that contacts 3 different nodes can get \emph{at least} $3\times3-\binom{3}{2}=6$ distinct packets. For instance, a user contacting  nodes $v_2,v_4$ and $v_5$ will get exactly 6 different packets, namely, all the packets except the one with index $6$. Whereas another user contacting $v_1,v_3$ and $v_4$ will get  all the 7 packets. In a worst-case analysis,    the capacity of the system is  limited by the user that gets the least number of packets, which is $6$ here.  Hence, the  outer MDS code allows any user to recover the stored file of 6 packets which is exactly the capacity $C_{MBR}(7,3,3)$ of \eqref{eq:Capacity}. 
\end{example}

\begin{center}
\begin{table*}[t]
\begin{tabular}{|l|l|}
\hline
Original repair model in  \cite{DGWWR07}& Repair model of FR codes\\ \hline\hline
\emph{Functional}: regenerated  data should satisfy the  MDS property.  & \emph{Exact}: regenerated data is an exact copy of the lost one.\\ \hline
\emph{Coded}: new node downloads linear combination of packets.   & \emph{Uncoded}: new node downloads  a specific packet with no coding. \\ \hline
\emph{Random-Access}: the new node contacts any $d$ surviving nodes. & \emph{Repair Table}: a table specifies the set of $d$ nodes to be contacted for repair.  \\ 
\hline
\end{tabular}
 \caption{ A comparison between the model for repair in the original work of Dimakis et al. in \cite{DGWWR07} and the model for repair for the  Fractional Repetition codes proposed here. }
   \label{tab:comparison}
\end{table*}
\end{center}
 
The previous two examples highlight the central role of the inner repetition code  that allows us to obtain  the desired uncoded and exact repair properties of the code in addition to achieving the capacity $C_{MBR}$ by carefully  placing the different copies of the coded packets on different nodes in the system. 
We call  the inner code a \emph{Fractional Repetition} (FR) code of repetition degree $\rho$  since  the content of each node is split into  $d$ packets and $\rho$ replicas of each are stored on different nodes in the system. For instance, the inner code in the first example was an FR code with $\rho=2$, whereas in the second example $\rho=3$.


Notice that in Example~\ref{ex:Fano},  a replacement node has to contact a \emph{specific} set of  $d$ nodes  for repair, depending on which nodes have failed.  For example, when node $v_1$ fails, a replacement node can recover all  the lost packets by contacting  nodes $\{v_4,v_5,v_6\}$, but not $\{v_2,v_3,v_4\}$. We assume  that there is a \emph{repair table} maintained in the system that is available to all the nodes in the DSS.  The repair table indicates for each possible failure pattern the set of nodes that can be contacted for repair, and which packet to download from each. This  table-based repair model   will be adopted throughout this paper and  differs from the random access model adopted in the literature  where repair can be performed  by contacting \emph{any} $d$ survivor node.  We believe that this relaxation in the repair model is a well-justified price to pay in order to obtain low-complexity regenerating codes, and goes along with practical system implementations that always  include a tracker server that stores the system metadata. 

Table~\ref{tab:comparison} summarizes  the differences between the repair model adopted here and the original model of  \cite{DGWWR07}.

\begin{figure}[t]
\begin{center}
\includegraphics[width=1\columnwidth]{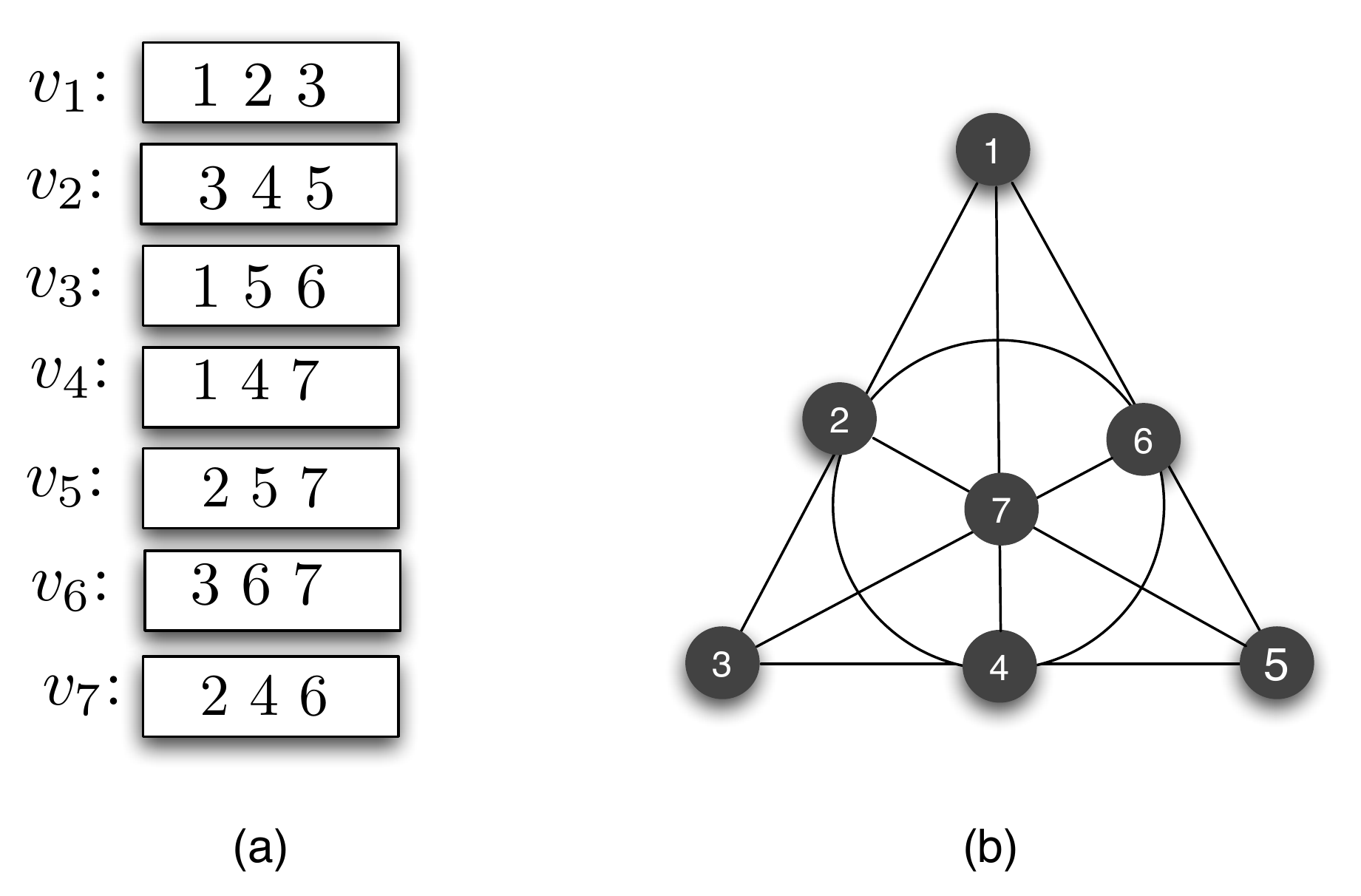}
 \caption{ (a) The inner repetition code of Example~\ref{ex:Fano} for a $(7,3,3)$ DSS.  This corresponds to a  Fractional Repetition code with repetition degree $\rho=3$.   The code structure is derived from the Fano plane depicted on the right. Each  of the 7 lines  in the Fano plane,   including the circle, corresponds to  a distinct storage node. The points lying on that line give the indices of  the packets stored on the node. The overall  code can achieve the capacity $C_{MBR}=6$ for this DSS, and has an exact and uncoded repair process. (b) The Projective plane of order 2, also known  as the Fano plane. }\label{fig:FanoCode}
\end{center}
\end{figure}


 

\section{Fractional Repetition Codes with $\rho=2$}\label{sec:SingleFailure}

The previous two examples suggest a general method for constructing exact MBR codes with  uncoded repair  process that is resilient to multiple failures. This construction consists of   concatenating  an outer MDS code with an  inner Fractional Repetition code with repetition degree $\rho$ that can tolerate up to $\rho-1 $ nodes failing together. Since MDS codes exist for all feasible parameters provided that the packets are taken from an alphabet of  large enough size, the  challenging part of the suggested construction   is    designing the Fractional Repetition code.  Assuming all the packets in the system are to be equally protected, we are motivated to provide the following general definition of FR codes:

\begin{definition}[Fractional Repetition Codes]\label{def:FR}
	A \emph{Fractional Repetition}  (FR) code $\mathcal{C}$, with repetition degree $\rho$, for an  $(n,k,d)$ DSS, is a collection $\mathcal{C}$ of $n$ subsets $V_1,V_2,\dots,V_n$ of a set $\Omega=\{1,\dots, \theta\}$ and of cardinality $d$ each, satisfying the condition  that each element of $\Omega$ belongs to exactly $\rho$ sets in the collection.
	\end{definition}

In this definition, each set $V_i$ contains the indices  of the coded packets at the output of the outer MDS code that are stored on node $v_i, i=1,\dots,n$. The value of  $\theta$, which will be determined later, corresponds to   the length of the codewords of the  outer MDS code.  For instance, following this definition, the FR code  of Example~\ref{ex:Fano} can be written as $\mathcal{C}=\{V_1,\dots,V_7\}$ with  $V_1=\{1,2,3\}, V_2=\{3,4,5\}, V_3=\{1,5,6\}, V_4=\{1,4,7\}$ $V_5=\{2,5,7\}, V_6=\{3,6,7\}, V_7=\{2,4,6\}$, where $\theta=7$ and $\Omega=\{1,\dots,7\}$.

We focus first on the design of   Fractional Repetition codes of repetition degree $\rho=2$ with an uncoded repair that is  tolerant to  a single  failure. We provide a code construction based on \emph{regular graphs} that can achieve the capacity $C_{MBR}$ of \eqref{eq:Capacity} for all feasible values of $n$ and $d$. 

To that end, 
we define the rate $R_\mathcal{C}(k)$ of an FR code $\mathcal{C}$ as the maximum file size, i.e., the maximum number of distinct packets, that the code is guaranteed to  deliver to \emph{any} user contacting $k$ nodes.

\begin{definition}[FR Code Rate]
The rate $R_\mathcal{C}(k)$ of an FR code $\mathcal{C}=\{V_1,V_2,\dots,V_n\}$  for a DSS with parameters $(n,k,d)$ is defined as 
\begin{equation}
R_\mathcal{C}(k):=\min_{\substack{
   I\subset [n]\\
   |I|=k
  }}|\cup_{i\in I}V_i|,
\end{equation}
with $[n]=\{1,\dots,n\}.$
\end{definition}

As it can be seen from the previous examples and the above definition, the DSS parameter $k$ specifying the number of nodes contacted by a user, is not intrinsically related to the construction of the FR code. An FR code designed for a DSS with parameters $(n,k_1,d)$ can be seamlessly used for another DSS with parameters $(n,k_2,d)$, with $k_1\neq k_2$. 
An FR code $\cal{C}$ is said to be   \emph{universally good}  if its rate is guaranteed to be no less than the capacity $C_{MBR}$ of the  DSS, \emph{i.e.},
$R_\mathcal{C}(k)\geq C_{MBR}(n,k,d)$, for all $k=1,\dots,d$. Here,  the inequality follows from  the fact that FR codes can have rates that exceed $C_{MBR}$ due to the table-based repair relaxation, a property that will be investigated further in Section~\ref{sec:FRCapacity}.

An $(n,k,d)$ DSS stores $nd$ packets in total. When an FR code of degree $\rho$ is used,  $\theta$ distinct packets  are stored in the system, where each is replicated exactly $\rho$ times. Therefore, the following relation exists between the FR code parameters: 
\begin{proposition}\label{prop:theta}
The parameter  $\theta$ in Def.~\ref{def:FR} of an FR code of degree $\rho$ for an $(n,k,d)$ DSS is given by,
\begin{equation}\label{eq:theta}
\theta \rho=nd.
\end{equation}
\end{proposition}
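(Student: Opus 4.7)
The plan is to prove the identity by a standard double-counting argument applied to the incidence structure associated with the FR code. Concretely, I would view the code $\mathcal{C}=\{V_1,\dots,V_n\}$ as a bipartite incidence between the $n$ storage nodes and the $\theta$ coded packets in $\Omega$, and count the set $\mathcal{I}=\{(i,j)\in[n]\times\Omega : j\in V_i\}$ in two different ways.

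First I would count $|\mathcal{I}|$ by iterating over nodes. By Definition~\ref{def:FR}, each subset $V_i$ has cardinality exactly $d$, which is the per-node storage capacity (in packets) in the minimum-bandwidth regime. Hence every node contributes $d$ incidences, giving $|\mathcal{I}|=\sum_{i=1}^n |V_i| = nd$.

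Next I would count the same set by iterating over packets. Again by Definition~\ref{def:FR}, every element of $\Omega$ belongs to exactly $\rho$ of the sets $V_1,\dots,V_n$, so each of the $\theta$ packets contributes $\rho$ incidences, yielding $|\mathcal{I}|=\theta\rho$. Equating the two expressions gives $\theta\rho = nd$, which is precisely \eqref{eq:theta}.

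There is essentially no obstacle here: the statement is a direct combinatorial consequence of the defining conditions (uniform block size $d$ and uniform replication $\rho$), and the argument is the usual ``count incidences two ways'' trick familiar from the theory of block designs. The only thing worth flagging is that the identity implicitly requires $\rho \mid nd$, which restricts the admissible parameters $(n,d,\rho)$ for which an FR code can exist; this divisibility constraint will reappear as a necessary condition in the constructions of Sections~\ref{sec:SingleFailure} and~\ref{sec:MultipleFailures}.
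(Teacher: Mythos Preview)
Your proof is correct and is essentially the same double-counting argument the paper uses: the paper observes that the system stores $nd$ packets in total while each of the $\theta$ distinct packets is replicated exactly $\rho$ times, which is precisely your two ways of counting the incidences in $\mathcal{I}$. Your version is just a more formal restatement of that one-line reasoning, and your remark on the divisibility constraint $\rho\mid nd$ is apt and indeed used later in the paper.
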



The Exact MBR  codes of Rashmi et al.\ were proposed in \cite{RSKR09} as capacity achieving codes   for the special case of $d=n-1$. In this case, when a node fails, all the remaining nodes in the system are contacted by the replacement node, which implies that the   random access and table-based repair models are equivalent.

 These codes can be viewed as special FR codes with repetition degree $\rho=2$ as shown in Example~\ref{ex:RashmiCode}.  Their general construction can be described  with the assistance of a complete graph $K_n$  defined on $n$ vertices $u_1,\dots, u_n$, with edges  indexed from $1$ to $\binom{n}{2}$.  Prop.~\ref{prop:theta} gives $\theta=\frac{n(n-1)}{2}=\binom{n}{2}$ distinct packets. The FR code is obtained by storing on node $v_i, i=1,\dots,n$, the packets having the same indices as  the edges adjacent to vertex $u_i$ in $K_n$. Figure~\ref{fig:Complete} depicts the complete graph $K_5$ with its edges indexed in a way to give the FR code of Fig.~\ref{fig:RashmiCode}.

\begin{figure}[t]
\begin{center}
\includegraphics[width=.7\columnwidth]{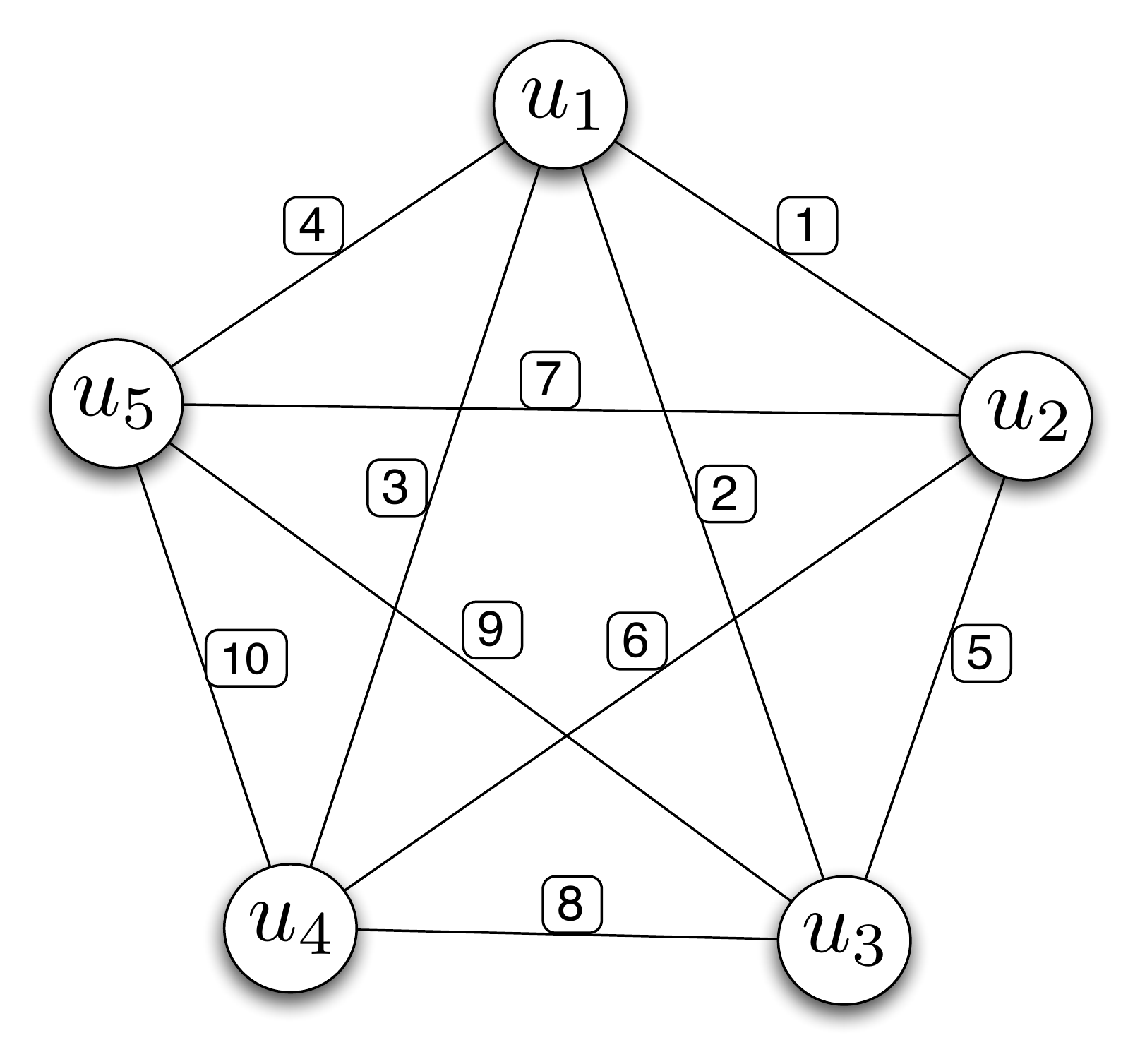}
 \caption{The complete graph $K_5$ on 5 vertices. The labeling of the edges   from $1$ to $\binom{5}{2}=10$ gives the FR code with $\rho=2$ for the DSS $(5,4,3)$ depicted in Fig.~\ref{fig:RashmiCode}. The edges adjacent to vertex $u_i$ give the indices of the packets stored on node $v_i$ in the DSS.}\label{fig:Complete}\end{center}
\end{figure}

Next, we describe a construction of FR codes with  repetition degree   $\rho=2$ and $d<n-1$. For $\rho=2$,   Prop.~\ref{prop:theta} gives a necessary condition for the existence of FR codes, that is, $nd$ should be even. We will  show that this is also a sufficient condition and provide a general code construction based on regular graphs.

 A  $d$-regular  graph $R_{n,d}$  on $n$ vertices is a \emph{simple} graph where all  vertices have the same  degree $d$, \emph{i.e.}, the same number of neighboring nodes. The graph $R_{n,d}$ has $\frac{nd}{2}$ vertices, and exists whenever $nd$ is even \cite{W99}.

\begin{construction}\label{con:regular}
An FR code with repetition degree $\rho=2$ can be constructed for an $(n,k,d)$ DSS, with $nd$ even, in the following way:
\begin{enumerate}
\item Generate a $d$-regular graph $R_{n,d}$ on $n$ vertices $u_1,\dots,u_n$.
\item  Index the edges of $R_{n,d}$ from $1$ to $\frac{nd}{2}$.
\item  Store on  node $v_i$ in the DSS the packets indexed by the edges  that are adjacent to vertex $u_i$ in the graph $R_{n,d}$.
\end{enumerate}
\end{construction}

\begin{figure}[t]
\begin{center}
\includegraphics[width=1\columnwidth]{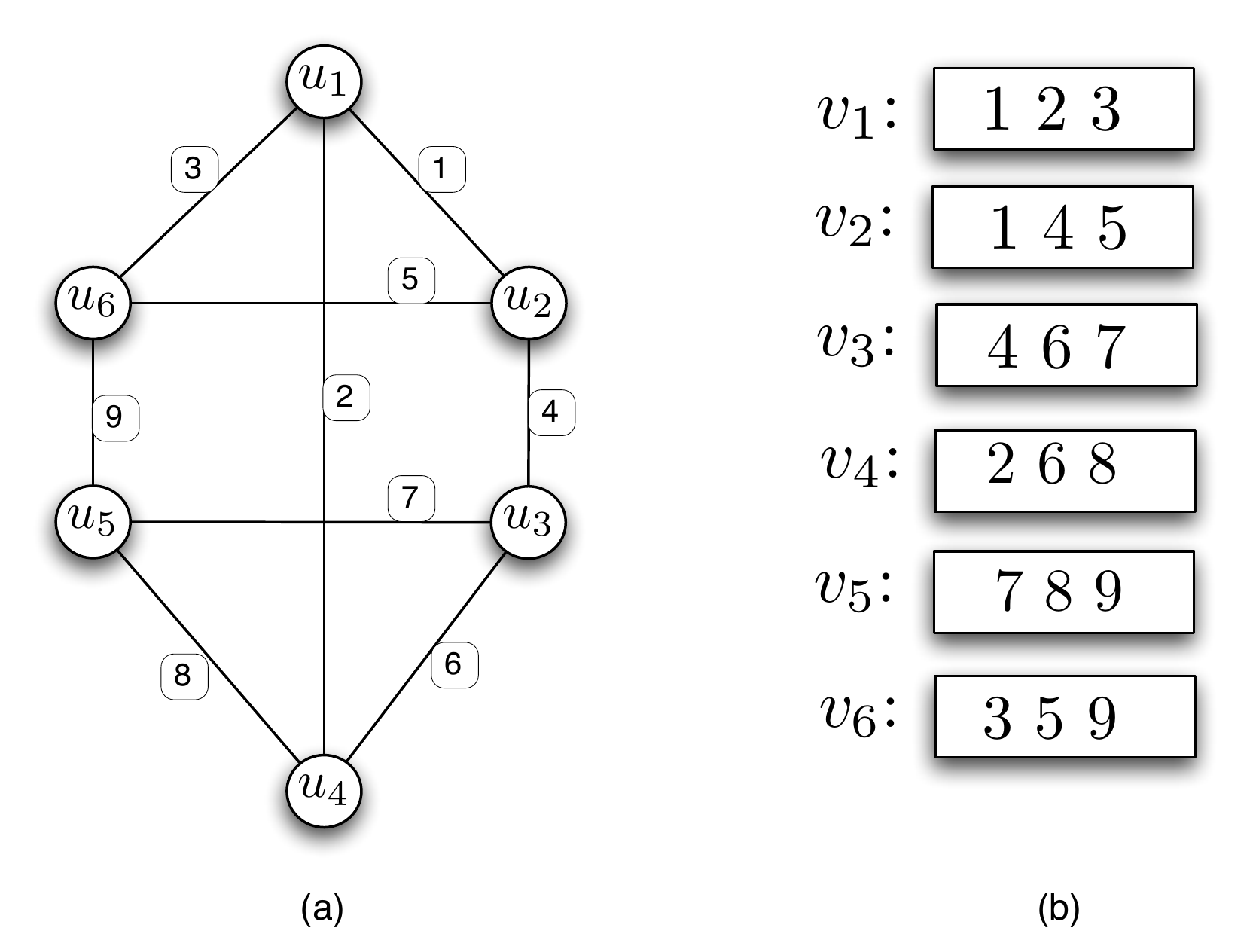}
\caption{(a) $R_{6,3}$ a $3$-regular graph on $6$ vertices. All the six vertices have constant degree equal to 3. The edges of the graph are  indexed from $1$ to $\frac{nd}{2}=\frac{6\times 3}{2}=9$. (b) The corresponding universally good FR code  with $\rho=2$ obtained by Construction~\ref{con:regular} for a DSS with $n=6$ an $d=3$.}\label{fig:Regular}
\end{center}
\end{figure}

The regular graph in Step 1 can be randomly generated using efficient randomized algorithms  that are well-studied  in the  literature, see for example \cite{KV03}.  The fact that the FR codes obtained by this construction have repetition degree $\rho=2$ is a direct consequence of  the graph being simple with each edge being adjacent to exactly two vertices. This also  implies that any two nodes cannot have in common more than one packet. Therefore, among any $k$ nodes observed by a user, there are at most $\binom{k}{2}$  repeated packets which corresponds to the case when any two nodes share a distinct packet. Therefore, we have the following lemma.

\begin{lemma}
The FR codes with repetition degree $\rho=2$ obtained by Construction~\ref{con:regular} are \emph{universally good} codes.
\end{lemma}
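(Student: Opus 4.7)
The plan is to translate the combinatorial property of the $d$-regular simple graph directly into a lower bound on the union size, and then compare to $C_{MBR}(n,k,d)=kd-\binom{k}{2}$. The paragraph preceding the lemma already sketches the intuition, so the job is mainly to make it precise.

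First I would fix an arbitrary index set $I\subseteq [n]$ with $|I|=k$ and compute $|\cup_{i\in I}V_i|$ using a double counting argument. Since $\rho=2$, every packet is stored on exactly two nodes of the DSS, so a packet appears in $\cup_{i\in I}V_i$ with multiplicity either $1$ or $2$ in the multiset union. Writing $a$ for the number of packets that appear once and $b$ for the number that appear twice, the identity $\sum_{i\in I}|V_i|=kd$ yields $a+2b=kd$, hence
\begin{equation}
\Bigl|\bigcup_{i\in I}V_i\Bigr|=a+b=kd-b.
\end{equation}
By Construction~\ref{con:regular}, a packet appears on both $v_i$ and $v_j$ exactly when the corresponding edge of $R_{n,d}$ is incident to both $u_i$ and $u_j$; hence $b$ equals the number of edges in the subgraph of $R_{n,d}$ induced by the vertex set $\{u_i:i\in I\}$.

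The second step is to bound $b$ from above. Since $R_{n,d}$ is a simple graph, the induced subgraph on $k$ vertices has at most $\binom{k}{2}$ edges (attained only if those vertices span a clique $K_k$). Therefore $b\leq\binom{k}{2}$ and
\begin{equation}
\Bigl|\bigcup_{i\in I}V_i\Bigr|\;\geq\;kd-\binom{k}{2}\;=\;C_{MBR}(n,k,d).
\end{equation}
Taking the minimum over all $k$-subsets $I$ gives $R_{\mathcal{C}}(k)\geq C_{MBR}(n,k,d)$ for every $k=1,\dots,d$, which is the definition of a universally good FR code.

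There is essentially no obstacle: the argument hinges only on $\rho=2$ (so that packets are shared by at most two nodes and the sharing pattern is exactly encoded by edges) and on the graph being \emph{simple} (so that each pair of nodes shares at most one packet, giving the $\binom{k}{2}$ bound). If I wanted to polish the write-up I would just remark that the same calculation shows the bound is tight whenever $R_{n,d}$ contains $K_k$ as a subgraph, which clarifies when equality with $C_{MBR}$ is achieved and foreshadows the discussion of FR capacity in Section~\ref{sec:FRCapacity}.
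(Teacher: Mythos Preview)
Your argument is correct and is exactly the approach the paper takes: the paragraph preceding the lemma is the paper's entire proof, observing that simplicity of $R_{n,d}$ forces any two nodes to share at most one packet, so among $k$ nodes at most $\binom{k}{2}$ packets are repeated. Your double-counting with $a+2b=kd$ just makes this one-line observation rigorous.
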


  Fig.~\ref{fig:Regular} shows  a 3-regular graph $R_{6,3}$ and the corresponding  universally good FR code obtained by   Construction~\ref{con:regular}  for the  DSS with $n=6$ and $d=3$.

%
%

\section{Fractional Repetition Codes with $\rho>2$}\label{sec:MultipleFailures}

Practical systems  require the repetition degree to be  at least 3 \cite{GFS},  and the previous construction based on regular graphs cannot be generalized to this case. We present here two new constructions of FR codes with $\rho>2$ based on a  combinatorial structure known as \emph{Steiner system} that can be thought of as a generalization of the projective plane of Example~\ref{ex:Fano}.


\begin{figure*}[t]
\begin{center}
\includegraphics[width=1.2\columnwidth]{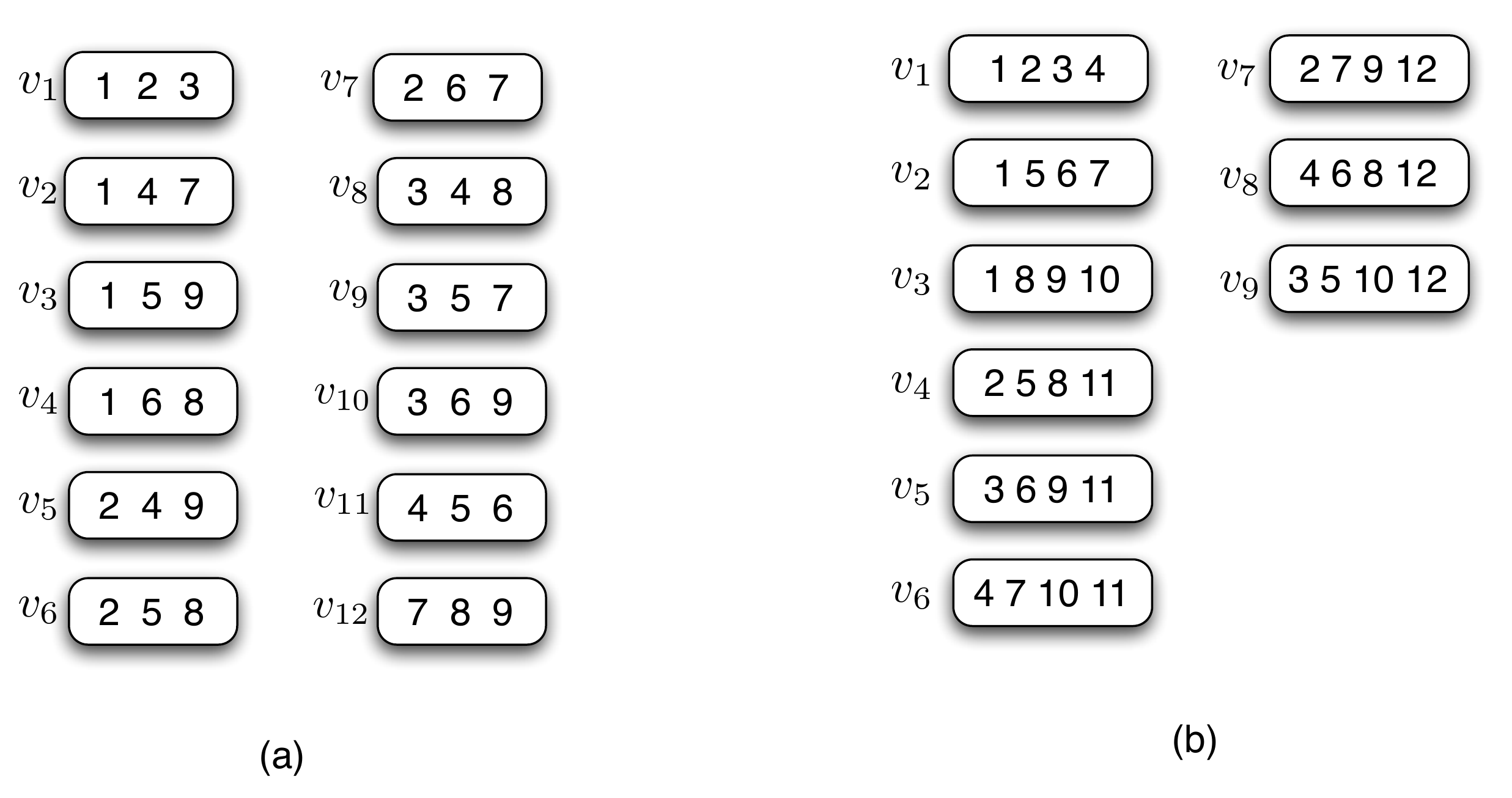}
 \caption{(a) FR code with $\rho=4$ for a DSS with $n=12$ and $d=3$ derived from the Steiner system S(2,3,9) using Construction~\ref{cons:direct}. (b) FR code with $\rho=3$ for a DSS with $n=9$ and $d=4$ derived from the same Steiner system  using Construction~\ref{cons:transpose}.}\label{fig:Transpose}
 \end{center}
\end{figure*}

\subsection{Steiner Systems}
We start by giving a definition of   a Steiner system.

\begin{definition}[Steiner System]\label{def:Steiner}
A Steiner system $S(t,\alpha,v)$ is a collection of  subsets, called blocks, $B_1,\dots,B_b$,  of size $\alpha$ of a set $\cal{V}$ containing $v$ elements, called points, with the property that any subset of  $t$ points is contained in \emph{exactly} one block.
\end{definition}

It can be shown that in a  Steiner system every point belongs to exactly the same  number  of   blocks  denoted $r$ \cite[p. 60]{Mcwilliams}. 
One way to guarantee the achievability of the capacity in \eqref{eq:Capacity} is to require that nodes do not share more than one packet. For this reason, we will be mostly interested in Steiner systems with $t=2$. Simple counting arguments give the following two well-known properties of a Steiner system $S(2,\alpha,v)$ \cite[Chap.~2]{Mcwilliams}.

\begin{proposition}\label{prop:Steiner}
The parameters $b$ and $r$ of a Steiner system $S(2,\alpha,v)$ are given by:
\begin{align}
b\alpha&=vr,\label{eq:st1}\\
v-1&=r(\alpha-1)\label{eq:st2}.
\end{align}
\end{proposition}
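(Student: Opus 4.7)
The plan is to prove both identities by standard double-counting arguments on incidence pairs, which is the classical way these parameters of a 2-design are derived.

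For equation~\eqref{eq:st1}, I would count the set of incidence pairs $\{(p, B) : p \in \mathcal{V},\ B \text{ is a block},\ p \in B\}$ in two ways. Summing over blocks, each of the $b$ blocks contributes $\alpha$ points (by Def.~\ref{def:Steiner}), giving $b\alpha$. Summing over points, each of the $v$ points lies in exactly $r$ blocks (this uniform replication number $r$ is the property quoted just before the proposition from \cite{Mcwilliams}), giving $vr$. Equating yields $b\alpha = vr$.

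For equation~\eqref{eq:st2}, I would fix a single point $p \in \mathcal{V}$ and count the pairs $\{(q, B) : q \in \mathcal{V}\setminus\{p\},\ B \text{ is a block containing both } p \text{ and } q\}$ in two ways. Summing over $q$: for each of the $v-1$ choices of $q \neq p$, the pair $\{p,q\}$ has size $t=2$ and so, by the defining property of the Steiner system, is contained in exactly one block; hence each $q$ contributes exactly one pair, for a total of $v-1$. Summing over blocks through $p$: there are $r$ such blocks, and each contains $\alpha - 1$ points other than $p$, giving $r(\alpha-1)$. Equating yields $v-1 = r(\alpha-1)$.

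There is no real obstacle here; both identities are immediate from the definition once one chooses the right incidence structure to double-count. The only subtle point worth mentioning explicitly in the write-up is justifying that $r$ is well-defined (constant across points), which has already been stated just above the proposition as a consequence of the $t=2$ axiom. With $r$ in hand, the two displayed equalities follow from the two counts above with no further calculation.
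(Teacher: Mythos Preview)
Your proposal is correct and is precisely the ``simple counting argument'' the paper alludes to; the paper does not actually spell out a proof but merely cites \cite[Chap.~2]{Mcwilliams}, and your two double counts are the standard derivation one finds there.
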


Equation~\eqref{eq:st1} is similar to \eqref{eq:theta} for FR codes. The Fano plane of Fig.~\ref{fig:FanoCode}(a) is an example of a Steiner system where $\cal{V}$ is the set of 7 points and the blocks are the lines (including the circle). The Fano plane is indeed $S(2,3,7)$ since there is a single line that goes through any two points.  Prop.~\ref{prop:Steiner} gives $r=3$, \emph{i.e.}, each point belongs to exactly 3 lines, and $b=7$, \emph{i.e.}, the Fano plane contains 7 lines, which can be easily checked on the figure.

For a Steiner system $S(t,\alpha,v)$ to exist, it is necessary that the parameters $b$ and $r$   given in Prop.~\ref{prop:Steiner} be integers. Wilson proved  in \cite{Wilson} that this condition is also sufficient when $v$ is large enough.

\begin{theorem}\label{th:Wilson}
Given a positive integer $\alpha$,  Steiner systems $S(2,\alpha, v)$ exist for all sufficiently large integers $v$ for which the congruences 
$vr\equiv 0 \bmod{\alpha}$ and 
$v-1\equiv 0 \bmod{\alpha-1},$
are valid. 
\end{theorem}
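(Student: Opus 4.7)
The plan is to follow Wilson's proof via the theory of pairwise balanced designs (PBDs) and his fundamental construction, since Theorem~\ref{th:Wilson} is exactly the statement for which this machinery was invented. First I would establish necessity of the two congruences by a standard double-count: in any $S(2,\alpha,v)$ every point lies in $r = (v-1)/(\alpha-1)$ blocks, and the total number of blocks is $b = v(v-1)/(\alpha(\alpha-1))$, both of which must be integers, yielding $(\alpha-1)\mid(v-1)$ and $\alpha\mid vr$. This is the easy half.

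For sufficiency, I would introduce pairwise balanced designs. A PBD$(v,K)$ is a collection of blocks with sizes in the set $K$ on $v$ points, such that every pair of points appears in exactly one block; an $S(2,\alpha,v)$ is precisely the case $K=\{\alpha\}$. Let $B(K)$ denote the set of admissible values $v$. The central tool is \emph{Wilson's fundamental construction}: given a group divisible design with group sizes $g_1,\dots,g_s$ and block sizes in $K$, together with transversal designs on each block, one can glue the ingredients into a larger GDD or PBD whose parameters depend in a controlled way on the pieces. Affine planes $\mathrm{AG}(2,q)$ and resolvable designs from prime powers supply a rich stock of small ``ingredient'' designs with block sizes that can be arranged to lie in $B(\{\alpha\})$.

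The strategy is then threefold: (i) build a finite library of small PBDs whose block sizes all lie in $B(\{\alpha\})$, using explicit constructions based on finite geometry; (ii) apply the fundamental construction iteratively to show $B(\{\alpha\})$ is closed under sufficiently many arithmetic operations; (iii) deduce by an arithmetic argument that this closure eventually contains every residue class allowed by the two congruences, and hence every sufficiently large $v$ in them. The main obstacle is step (iii): one must verify that the closure operations reach every admissible residue and grow densely enough to leave only finitely many exceptions. This is an asymptotic density argument on arithmetic progressions rather than an explicit bound, which is precisely why the conclusion is non-constructive — it only asserts existence for ``sufficiently large $v$'' without providing a threshold. For our purposes of constructing Fractional Repetition codes this is acceptable, since we only need the existence of one Steiner system $S(2,\alpha,v)$ for each large enough admissible $v$ to feed into the constructions of Section~\ref{sec:MultipleFailures}.
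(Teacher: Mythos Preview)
The paper does not prove Theorem~\ref{th:Wilson} at all; it is quoted as a known result from Wilson's work \cite{Wilson} and used as a black box to guarantee a supply of Steiner systems for Constructions~\ref{cons:direct} and~\ref{cons:transpose}. So there is no ``paper's own proof'' to compare against.

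Your outline is a faithful high-level sketch of Wilson's original argument via PBD-closure and the fundamental construction, and the necessity half is exactly right. But note that what you have written is a plan, not a proof: the substantive content lies entirely in steps (i)--(iii), each of which is a paper-length undertaking (building the ingredient designs from finite geometries, proving the closure properties of $B(\{\alpha\})$, and carrying out the density argument on residue classes). For the purposes of this paper that is fine --- the theorem is being invoked, not re-derived --- so the appropriate response here is simply to cite Wilson rather than to attempt a self-contained proof.
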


\subsection{Code Constructions}
We present now two constructions of universally good FR codes derived from Steiner systems. Example~\ref{ex:Fano} suggests the following direct construction.

\begin{construction}\label{cons:direct}
Given a Steiner system $S(2,\alpha,v)$  with blocks $B_1,\dots,B_b\subset \mathcal{V}=[v]$, an FR code $\mathcal{C}$  can be obtained by taking  $\mathcal{C}=\{B_1,\dots,B_b\}$. This gives an FR code with $\rho=\frac{v-1}{\alpha-1}$ and $\theta=v$ for a DSS with parameters   $n=\frac{v(v-1)}{\alpha(\alpha-1)}$ and $d=\alpha$   as given by Prop.~\ref{prop:Steiner}.\end{construction}


By definition, any two  blocks in $S(t,\alpha,v)$ cannot intersect in more than $t-1$ elements. This implies that in  the FR codes obtained by Construction~\ref{cons:direct}, two nodes can have at most one packet in common. Thus,  in  any collection of $k$ nodes,  there are at most $\binom{k}{2}$ repeated packets. Therefore,   the obtained FR codes can achieve the capacity $C_{MBR}$ for all  $k=1,\dots,d$ as stated in the following lemma.

\begin{lemma}
The FR codes obtained by Construction~\ref{cons:direct} are universally good.
\end{lemma}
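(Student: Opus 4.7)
The plan is to show that for any $k$ nodes $V_{i_1},\dots,V_{i_k}$ from the FR code, the union $|V_{i_1}\cup\cdots\cup V_{i_k}|$ is at least $kd-\binom{k}{2}$. The key structural fact I would exploit is the property of Steiner systems with $t=2$: any two blocks intersect in at most one point. Indeed, if two distinct blocks $B_i,B_j$ shared two points $p,q$, then the pair $\{p,q\}$ would be contained in two blocks, contradicting Def.~\ref{def:Steiner} which requires every $2$-subset of $\mathcal{V}$ to lie in exactly one block. Since by Construction~\ref{cons:direct} the sets $V_\ell$ are exactly the blocks $B_\ell$, any two node-sets share at most one packet.

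Next I would bound the overlap using inclusion--exclusion. By Bonferroni,
\begin{equation*}
\Big|\bigcup_{j=1}^{k} V_{i_j}\Big| \;\geq\; \sum_{j=1}^{k}|V_{i_j}| \;-\; \sum_{1\leq j<\ell\leq k}|V_{i_j}\cap V_{i_\ell}|.
\end{equation*}
Since each $|V_{i_j}|=d=\alpha$ and each pairwise intersection has cardinality at most $1$, the right-hand side is at least $kd-\binom{k}{2}$. Taking the minimum over all $k$-subsets $I\subset[n]$ gives
\begin{equation*}
R_\mathcal{C}(k) \;\geq\; kd-\binom{k}{2} \;=\; C_{MBR}(n,k,d),
\end{equation*}
which is the definition of universally good.

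I would then close by noting that the parameters $n=\tfrac{v(v-1)}{\alpha(\alpha-1)}$, $d=\alpha$, $\rho=\tfrac{v-1}{\alpha-1}$, and $\theta=v$ are consistent with Prop.~\ref{prop:theta} (i.e.\ $\theta\rho = nd$), which follows directly from Prop.~\ref{prop:Steiner}. There is really no serious obstacle here: the entire argument rests on the ``two blocks meet in at most one point'' property, which is an immediate consequence of the $t=2$ Steiner condition. The only mild subtlety is checking the $k=1$ edge case, where the bound $R_\mathcal{C}(1)\geq d$ is trivially tight since each block has exactly $d$ elements.
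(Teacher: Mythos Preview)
Your proof is correct and follows essentially the same approach as the paper: the paper also observes that any two blocks of $S(2,\alpha,v)$ intersect in at most one point, so among any $k$ nodes there are at most $\binom{k}{2}$ repeated packets, giving $R_{\mathcal{C}}(k)\ge kd-\binom{k}{2}$. Your use of the Bonferroni inequality just makes this counting step more explicit.
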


Construction~\ref{cons:direct}  has the disadvantage that the two important code design parameters, $n$ and $\rho$ do not figure explicitly in the Steiner system parameters. Next, we present a second construction where $n$ and $\rho$ directly determine the Steiner system and where the repair degree $d$ is a fraction of the survivor nodes. 


\begin{construction}\label{cons:transpose}
Given a Steiner system $S(2,\alpha,v)$  with blocks $B_1,\dots,B_b\subset \mathcal{V}=[v]$, an FR code $\mathcal{C}=\{V_1,\dots,V_n\}$  can be obtained by taking
$$V_i=\{j|i\in B_j\},$$
for $i=1,\dots,n$.  This gives an FR code with $\rho=\alpha$ and   $\theta=\frac{n(n-1)}{\rho(\rho-1) } $ for a DSS with parameters  
 $n=v$ and $d=\frac{n-1}{\rho-1}$  as  given by Prop~\ref{prop:Steiner}.
\end{construction}

%

 We refer to the codes obtained by  this construction as  \emph{Transpose} codes since 
the role of the blocks and points  are reversed. The blocks now correspond to  packets  and  the points to   the  storage nodes. Therefore, any two nodes have exactly  one packet in common. Therefore, we get the following lemma.

\begin{lemma}\label{lem:TransposeG}
The FR codes obtained by Construction~\ref{cons:transpose} are universally good.
\end{lemma}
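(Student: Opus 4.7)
The plan is to reduce universal goodness to a short double-counting argument built on the Steiner defining property. First I would translate the $t=2$ axiom of $S(2,\alpha,v)$ into a pairwise-intersection statement about the node sets $V_i$ of the transpose code: by construction, $j\in V_i\cap V_{i'}$ iff both points $i$ and $i'$ lie in block $B_j$, and the Steiner axiom guarantees that any two distinct points lie in exactly one common block. Hence $|V_i\cap V_{i'}|=1$ for every $i\neq i'$, which is precisely the property asserted informally in the paragraph preceding the lemma.

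Next, fix an arbitrary $I\subset[n]$ with $|I|=k$ and let $D:=|\cup_{i\in I}V_i|$ be the number of distinct packets held by the chosen nodes. For every such packet $p$, set $m_p:=|\{i\in I:p\in V_i\}|\geq 1$. Counting node-packet incidences over the chosen nodes gives $\sum_p m_p=kd$, since $|V_i|=d$. Counting unordered pairs $\{i,i'\}\subset I$ according to the unique packet they share yields $\sum_p\binom{m_p}{2}=\binom{k}{2}$, using the previous step.

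I would then finish via the elementary inequality $\binom{m}{2}\geq m-1$ for integers $m\geq 1$ (equivalent to $(m-1)(m-2)\geq 0$). Summing,
\[
\binom{k}{2}\;=\;\sum_p\binom{m_p}{2}\;\geq\;\sum_p(m_p-1)\;=\;kd-D,
\]
so $D\geq kd-\binom{k}{2}=C_{MBR}(n,k,d)$. Since $I$ was arbitrary, $R_\mathcal{C}(k)\geq C_{MBR}(n,k,d)$ for every $k\in\{1,\ldots,d\}$, which is universal goodness.

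There is no real obstacle beyond spotting the correct double-count; the Steiner axiom does all the work by converting a union-of-sets problem into a clean quadratic identity in the multiplicities $m_p$, after which an entry-wise convexity inequality closes the argument.
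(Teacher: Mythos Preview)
Your proof is correct and follows essentially the same approach as the paper: both hinge on the observation that any two nodes in the transpose code share \emph{exactly} one packet (a direct translation of the $t=2$ Steiner axiom), from which the bound $|\cup_{i\in I}V_i|\geq kd-\binom{k}{2}$ follows. The paper states this last step informally (``at most $\binom{k}{2}$ repeated packets''), whereas you supply a clean double-count via the multiplicities $m_p$ and the inequality $\binom{m}{2}\geq m-1$; this is a nice way to make the step rigorous, but it is the same idea rather than a different route.
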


To highlight the difference between these two constructions, we give an example in Figure~\ref{fig:Transpose} when they are both applied to the unique Steiner system S(2,3,9) \cite[p. 27]{HandbookComb}.  Construction~\ref{cons:direct} gives an FR code with $\rho=4$ for a DSS with $n=12$ and $d=3$, whereas Construction~\ref{cons:transpose} gives an FR code with $\rho=3$ for a DSS with $n=9$ and $d=4$.  Note that these two constructions will give the same FR code (up to relabeling) when applied to projective planes such as  the Fano plane of Fig.~\ref{fig:FanoCode}(b).

%

The previous two constructions assume the existence of the Steiner system with the desired parameters, which is not always true.  However, Steiner systems $S(2,\alpha,v)$ are known to exist for  small values of $\alpha$, namely $\alpha=2 ,\dots, 5$, and for any $v$ whenever  the integrality conditions given by Th.~\ref{th:Wilson} are satisfied. This result in conjunction with Construction~\ref{cons:transpose} gives the necessary and sufficient conditions for the existence of Transpose codes with low repetition degree.

\begin{corollary}
Transpose codes  with repetition degree $\rho=2,\dots,5$ exist if and only if
$n-1\equiv 0 \bmod{\rho-1}$ and $n(n-1) \equiv 0 \bmod{\rho(\rho-1)}$.
\end{corollary}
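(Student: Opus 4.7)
The plan is to reduce the corollary to a known existence question about Steiner systems $S(2,\rho,n)$. By Construction~\ref{cons:transpose}, a Transpose code with repetition degree $\rho$ for an $(n,k,d)$ DSS exists if and only if a Steiner system $S(2,\alpha,v)$ exists with $\alpha=\rho$ and $v=n$ (the parameter $d$ is then forced to be $(n-1)/(\rho-1)$, and $k$ is irrelevant to the construction). So the statement to prove is: $S(2,\rho,n)$ exists if and only if the two stated congruences hold.

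For necessity, I would invoke Prop.~\ref{prop:Steiner} directly. Solving \eqref{eq:st2} for $r$ gives $r=(n-1)/(\rho-1)$, and substituting into \eqref{eq:st1} gives $b=n(n-1)/(\rho(\rho-1))$. Since $r$ and $b$ must be positive integers (they count blocks through a point and total blocks, respectively), the divisibilities $\rho-1 \mid n-1$ and $\rho(\rho-1) \mid n(n-1)$ are forced.

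For sufficiency I would handle each value of $\rho$ separately. The case $\rho=2$ is trivial: $S(2,2,n)$ is just the set of all $\binom{n}{2}$ pairs, which always exists, and indeed the congruences reduce to trivialities. For $\rho=3,4,5$, Wilson's asymptotic Theorem~\ref{th:Wilson} already handles all sufficiently large $n$ satisfying the congruences, so the only task remaining is to cite the classical existence theorems of Hanani that close the small-$n$ gap: Steiner triple systems $S(2,3,n)$ exist for all $n\equiv 1,3 \pmod{6}$ (Kirkman), $S(2,4,n)$ systems exist for all $n\equiv 1,4 \pmod{12}$, and $S(2,5,n)$ systems exist for all $n\equiv 1,5 \pmod{20}$; one then checks that in each case the congruences $\rho-1\mid n-1$ and $\rho(\rho-1)\mid n(n-1)$ are equivalent to Hanani's residue conditions.

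The main obstacle is not any computation but a bookkeeping one: verifying the equivalence between the two stated divisibility conditions and the classical residue descriptions of admissible $n$ for each $\rho\in\{3,4,5\}$. For instance, for $\rho=3$ one checks that $n-1 \equiv 0 \pmod 2$ together with $n(n-1)\equiv 0 \pmod 6$ is equivalent to $n\equiv 1,3\pmod 6$, and analogous elementary checks work for $\rho=4,5$. Once these equivalences are recorded, the corollary follows by combining Construction~\ref{cons:transpose}, Prop.~\ref{prop:Steiner}, and the cited existence results.
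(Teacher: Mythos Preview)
Your proposal is correct and follows essentially the same route as the paper: the corollary is justified there by the sentence immediately preceding it, which asserts that Steiner systems $S(2,\alpha,v)$ with $\alpha\in\{2,\dots,5\}$ exist for all $v$ satisfying the integrality conditions of Theorem~\ref{th:Wilson}, and then invokes Construction~\ref{cons:transpose} with $(\alpha,v)=(\rho,n)$. Your write-up is more explicit---naming Kirkman and Hanani for the small-$\rho$ existence results and verifying the equivalence with the stated congruences---but the underlying argument is identical.
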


The previous corollary implies that for the important practical case of systems with repetition degree $\rho=3$, universally good  FR codes  with repair degree $d=\frac{n-1}{2}$ can be obtained by Construction~\ref{cons:transpose} using Steiner systems $S(2,3,n)$ which exist for all $n\equiv 1,3 \mod{6}$. Steiner systems with $\alpha=3$, known in the literature as Steiner triple systems, are historically the most investigated systems and explicit constructions, such as Bose and Skolem constructions,   exist for all feasible values of $n$ \cite{DesignTh}.


\section{Capacity under Exact Uncoded Repair} \label{sec:FRCapacity}

The universally good FR codes constructed in the previous sections are guaranteed to have a rate greater or equal to the capacity $C_{MBR}$ of the system under random access and functional repair. However,  there exist cases where FR codes can achieve a storage capacity that exceeds   $C_{MBR}$.  
For instance, consider the FR code $\mathcal{C}=\{ \{1,2,3\}, \{4,5,6\},  \{7,8,9\}, \{1,4,7\}, \{2,5,8\}, \{3,6,9\}       \}$ 
of repetition degree 2 for the $(6,3,3)$ DSS  is depicted in Fig.~\ref{fig:Grid}.  It can be checked any user contacting $3$ nodes observes at least $7$ distinct packets. Therefore, this code has a rate $R_{\mathcal{C}}(3)=7> C_{MBR}=6$.

We refer to  the maximum file size that a DSS with parameters $ (n,k,d)$ can store under exact and uncoded repair  as its \emph{Fractional Repetition (FR) capacity} $C_{FR}$ defined as follows:
\begin{definition}[Fractional Repetition Capacity]
The Fractional Repetition (FR) capacity, denoted by $C_{FR} (k,\rho)$ of a distributed storage system  with parameters $ (n,k,d)$ is defined, for all $\rho$ satisfying $nd\equiv 0 \bmod{\rho}$, as 
$$C_{FR}(k,\rho):=\max_{\mathcal{C}}R_{\mathcal{C}}(k),$$
where $\mathcal{C}$ is any FR code with repetition degree $\rho$ for an  $(n,k,d)$ DSS. 
\end{definition}

The condition on $\rho$ in the definition above is needed by Prop.~\ref{prop:Steiner} to guarantee the existence of an FR code $\mathcal{C}$. Note that this notion of capacity assumes that a packet is an atomic unit of information that cannot be divided, which is usually true in real applications such the one in  \cite{GFS} where packets are of size 64 MB.

The code constructions of the previous sections imply lower bounds on the FR capacity. Next, we present two  upper bounds on  $C_{FR}$. The first is based on an averaging argument  and is presented in  Lemma~\ref{lem:average}. 

\begin{figure}[t]
\begin{center}
\includegraphics[width=1\columnwidth]{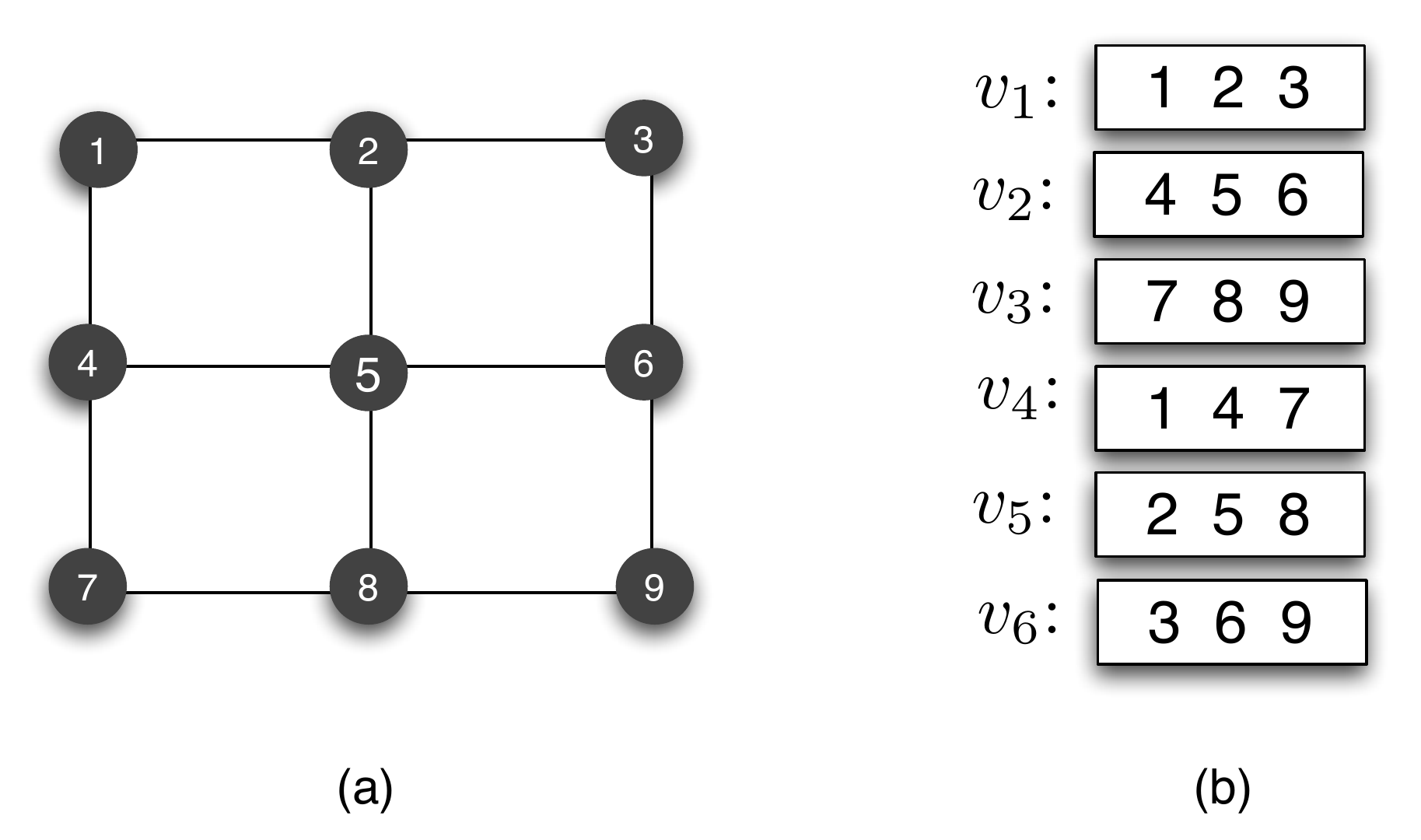}
 \caption{(a)  A $3\times 3$ grid  of $9$ points and $6$ lines. (b) The corresponding FR code achieving a storage capacity exceeding $C_{MBR}$.}\label{fig:Grid}
 \end{center}
\end{figure}

\begin{lemma} \label{lem:average}
For a  DSS with parameters $ (n,k,d)$,
$$C_{FR}(k,\rho)\leq \left \lfloor\frac{nd}{\rho} \left(1-\frac{\binom{n-\rho}{k}}{\binom{n}{k}}\right)\right \rfloor.$$
\end{lemma}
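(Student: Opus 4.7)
The plan is to bound $C_{FR}(k,\rho)$ via a standard averaging argument: the minimum of $|\cup_{i \in I} V_i|$ over all $k$-subsets $I \subset [n]$ is no greater than the average of this quantity, and the average can be computed exactly using the structural properties of an FR code.

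First, I would fix a universally good FR code $\mathcal{C}=\{V_1,\dots,V_n\}$ with repetition degree $\rho$, so that by Proposition~\ref{prop:theta} there are $\theta = nd/\rho$ distinct packets, each appearing in exactly $\rho$ of the sets $V_i$. I would then write
\begin{equation*}
R_{\mathcal{C}}(k) \;=\; \min_{\substack{I\subset[n]\\|I|=k}} \bigl|\textstyle\bigcup_{i\in I} V_i\bigr|
\;\leq\; \frac{1}{\binom{n}{k}} \sum_{\substack{I\subset[n]\\|I|=k}} \bigl|\textstyle\bigcup_{i\in I} V_i\bigr|.
\end{equation*}

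Next I would swap the order of summation by writing $|\cup_{i\in I} V_i| = \sum_{x\in\Omega} \mathbf{1}\{x \in \cup_{i\in I} V_i\}$. For each fixed packet $x \in \Omega$, let $S_x = \{i : x \in V_i\}$, which by Definition~\ref{def:FR} satisfies $|S_x| = \rho$. Then $x \in \cup_{i\in I} V_i$ precisely when $I \cap S_x \neq \emptyset$, and the number of $k$-subsets $I$ that miss $S_x$ entirely is $\binom{n-\rho}{k}$. Hence the number of $k$-subsets for which $x$ lies in the union equals $\binom{n}{k} - \binom{n-\rho}{k}$, independent of $x$. Summing the contributions of all $\theta = nd/\rho$ packets yields
\begin{equation*}
\frac{1}{\binom{n}{k}} \sum_{\substack{I\subset[n]\\|I|=k}} \bigl|\textstyle\bigcup_{i\in I} V_i\bigr|
\;=\; \frac{nd}{\rho}\left(1 - \frac{\binom{n-\rho}{k}}{\binom{n}{k}}\right).
\end{equation*}

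Finally, since $R_{\mathcal{C}}(k)$ is a nonnegative integer (it counts distinct packets), I would take the floor of the right-hand side and then take the maximum over all valid FR codes $\mathcal{C}$ to obtain the stated bound on $C_{FR}(k,\rho)$. There is no real obstacle here: the argument is a double-counting computation whose only ingredient beyond linearity of expectation is the defining property of FR codes that each packet appears in exactly $\rho$ nodes, which makes the ``miss'' count $\binom{n-\rho}{k}$ the same for every packet. The bound is essentially tight for averaging, so any improvement would require a more refined combinatorial argument exploiting intersection patterns between the $V_i$'s.
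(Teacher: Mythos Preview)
Your proof is correct and follows essentially the same averaging/double-counting argument as the paper: bound the minimum by the average, then compute the average by counting, for each of the $\theta=nd/\rho$ packets, the $\binom{n}{k}-\binom{n-\rho}{k}$ many $k$-subsets whose union contains it. One small note: the qualifier ``universally good'' in your first step is unnecessary and should be dropped, since the bound must hold for an \emph{arbitrary} FR code before you take the maximum over $\mathcal{C}$; with that word removed, your argument is identical to the paper's.
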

\begin{proof}
Let ${\cal C}=\{V_1,\dots, V_n\}$ be an FR code with repetition degree $\rho$, where $V_i\subset [\theta], |V_i|=d$ and $\theta=\frac{nd}{\rho}$ as given by Prop.~\ref{prop:theta}.

 Define the set ${\cal U}$ as 
$$\mathcal{U}:=\{U_I=\cup_{i\in I} V_i: I\subset [n], |I|=k\}.$$
The set $U_I$ represents the set of packets observed by a user contacting the nodes in the DSS indexed by the elements in $I$. We want to show that the term on the right in the inequality is the average cardinality of the sets in $\mathcal{U}$ under uniform distribution.  We denote this average by $\overline{U}$. To find $\overline{U}$, we count the following quantity $\sum_{U_I\in \mathcal{U}} |U_I|$ in two ways. 

First, we have by definition
$$\sum_{U_I\in \mathcal{U}} |U_I|=\binom{n}{k}\overline{U}.$$
But, each element in $[\theta]$ belongs to exactly $\binom{n}{k}-\binom{n-\rho}{k}$ sets in $\mathcal{U}$. Therefore, 
$$\sum_{U_I\in \mathcal{U}} |U_I|=\theta\left(\binom{n}{k}-\binom{n-\rho}{k} \right).$$
The upper bounds follows then from the fact that there must be in $\mathcal{U}$ at least one set of cardinality less that the average.
\end{proof} 

For instance, for the DSS $(7,3,3)$, Lem.~\ref{lem:average} implies that $R(3,3)\leq \lfloor 6.2 \rfloor =6$. Therefore, the FR code of Example~\ref{ex:Fano} is optimal and $C_{FR}(3,3)=6$.  However, the above upper bound has the disadvantage of  becoming loose for large values of $n$ and $k$ since the FR capacity is by definition a worst case measure. 

We also give a second bound  on  the FR capacity of a DSS which is defined using a recursive function and is tighter than the previous one.  

\begin{lemma}
For a DSS $(n,k,d)$, the FR capacity is upper bounded by the function  $g(k)$, \emph{i.e.},  
$C_{FR}(k,\rho)\leq g(k),$
where $g(k)$ is  defined recursively as 
\begin{align}
g(1)&=d,\\
g(k+1)&=g(k)+d-\left \lceil  \frac{\rho g(k)-kd}{n-k} \right \rceil.
\end{align}
\end{lemma}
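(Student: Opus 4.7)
The plan is to proceed by induction on $k$, exhibiting for every FR code $\mathcal{C}$ an explicit $(k+1)$-subset of nodes whose union size upper-bounds $R_\mathcal{C}(k+1)$. The base case $g(1)=d$ is immediate: since $|V_i|=d$ for every $i$, any single node contributes exactly $d$ packets, so $R_\mathcal{C}(1)=d$.

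For the inductive step I would let $I^\star\subset [n]$ with $|I^\star|=k$ be a subset attaining the minimum in the definition of $R_\mathcal{C}(k)$, set $U^\star:=\cup_{i\in I^\star}V_i$ so that $|U^\star|=R_\mathcal{C}(k)\le g(k)$, and then greedily adjoin a single carefully chosen node. The combinatorial core of the argument is a double count of the occurrences of the packets of $U^\star$ across the whole system: by Prop.~\ref{prop:theta} each such packet appears exactly $\rho$ times in the system, giving $\rho|U^\star|$ occurrences in total; exactly $kd$ of these lie inside the chosen nodes, leaving $\rho|U^\star|-kd$ occurrences distributed over the $n-k$ remaining nodes. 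An averaging / pigeonhole step then produces some $j\notin I^\star$ with
$$|V_j\cap U^\star|\;\ge\;\left\lceil\frac{\rho|U^\star|-kd}{n-k}\right\rceil,$$
so that $v_j$ contributes at most $d-\lceil(\rho|U^\star|-kd)/(n-k)\rceil$ new packets to the union.

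Adjoining $v_j$ to $I^\star$ gives an explicit $(k+1)$-subset witnessing the bound
$$R_\mathcal{C}(k+1)\;\le\;R_\mathcal{C}(k)+d-\left\lceil\frac{\rho R_\mathcal{C}(k)-kd}{n-k}\right\rceil,$$
and the inductive hypothesis $R_\mathcal{C}(k)\le g(k)$ would then let me replace $R_\mathcal{C}(k)$ by $g(k)$ on the right to conclude $R_\mathcal{C}(k+1)\le g(k+1)$.

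The step requiring the most care is exactly this substitution: it is legitimate only when the map $x\mapsto x-\lceil(\rho x-kd)/(n-k)\rceil$ is non-decreasing in $x$, which a short calculation with the ceiling function shows to be equivalent to $\rho\le n-k$. This holds throughout the low-repetition regime of interest in Section~\ref{sec:MultipleFailures} (and in all the constructions based on Steiner systems of Section~\ref{sec:MultipleFailures}), but when $\rho>n-k$ the argument must be supplemented by the trivial global ceiling $|U^\star|\le\theta=nd/\rho$ coming from Prop.~\ref{prop:theta}, together with a case check verifying that $g(k+1)$ still dominates. This monotonicity/saturation verification is the only real subtlety in what is otherwise a routine averaging-plus-induction proof.
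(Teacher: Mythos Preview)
The paper does not actually give a proof of this lemma: it states explicitly that ``the proof for this lemma is omitted due to space restrictions and can be found in [webfullversion].'' So there is no in-paper argument to compare against.

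On its own merits your argument is the natural one and is essentially correct. The double count $\rho|U^\star|-kd$ of residual occurrences of $U^\star$-packets among the $n-k$ remaining nodes, followed by pigeonhole to locate a node with overlap at least $\lceil(\rho|U^\star|-kd)/(n-k)\rceil$, is exactly the right mechanism, and it yields the recursive inequality $R_\mathcal{C}(k+1)\le f_k(R_\mathcal{C}(k))$ with $f_k(x)=x+d-\lceil(\rho x-kd)/(n-k)\rceil$. You are also right that passing from this to $R_\mathcal{C}(k+1)\le f_k(g(k))=g(k+1)$ requires $f_k$ to be non-decreasing on the relevant range, which amounts to $\rho\le n-k$; this is the one genuine subtlety, and you flag it honestly. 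In the regime the paper cares about (small $\rho$, in particular the Steiner-based constructions with $\rho\le 5$ and $k\le d\le n-1$) the condition holds, so your proof is complete there. For the boundary regime $\rho>n-k$ your proposed patch via the global cap $|U^\star|\le\theta=nd/\rho$ is the right idea; one clean way to organize it is to note that once $|U_m|=\theta$ the recursion saturates (the pigeonhole overlap then equals $d$, so no new packets are ever added), and to check that $g$ saturates at the same value.
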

%
%
%

The proof for this lemma is omitted  due to space restrictions and can be found in \cite{webfullversion}.

\section{Conclusion and Open Problems}\label{sec:Conclusion}

We proposed  a new class of Exact Minimum-Bandwidth Regenerating (MBR) codes for distributed storage systems characterized by a low complexity \emph{uncoded} repair process.  The main component of our  construction is  a new code that we call Fractional Repetition (FR) code. An FR code with repetition degree $\rho$ guarantees uncoded repair for up  to $\rho-1$ failures. It  consists of splitting the data on each node into multiple packets and storing  $\rho$ replicas of each on distinct nodes in the system.  An additional outer MDS code guarantees that a user contacting a sufficient number of storage nodes will  be able to retrieve the stored file. 
%

 For single node failures, \emph{i.e.},  $\rho=2$, we presented a construction of FR codes based on \emph{regular graphs} for all feasible system parameters. For the multiple failures case, \emph{i.e.}, $\rho>2$,  we presented two code constructions   based on \emph{Steiner systems}. Of particular importance are  the constructed \emph{Transpose} codes where the nodes contacted for repair are just a fraction of the surviving ones. All the obtained codes are guaranteed to achieve the storage capacity  under random-access repair. The adopted table-based repair model motivates a new concept of  capacity for distributed storage systems,  referred  to as of Fractional Repetition (FR) capacity,  which we studied and derived corresponding bounds.
 
 This work constitutes the first step in the study of Fractional Repetition codes and  many important questions remain open. For instance, it is not known whether  FR codes with $\rho>2$ exist for system parameters not covered by our constructions. Moreover,  a general expression of the FR capacity is still an open problem, as well as codes that can achieve it.

\bibliographystyle{ieeetr}
\bibliography{DSS}

\begin{thebibliography}{10}

\bibitem{Ocean}
S.~Rhea, C.~Wells, P.~Eaton, D.~Geels, B.~Zhao, H.~Weatherspoon, and
  J.~Kubiatowicz, ``{M}aintenance-free global data storage,'' {\em IEEE
  Internet Computing}, pp.~40--49, 2001.

\bibitem{DGWR07}
A.~G. Dimakis, P.~B. Godfrey, M.~J. Wainwright, and K.~Ramchandran, ``Network
  coding for distributed storage systems,'' in {\em INFOCOM'07}, 2007.

\bibitem{DGWWR07}
A.~Dimakis, P.~Godfrey, Y.~Wu, M.~Wainright, and K.~Ramchandran, ``Network
  coding for distributed storage systems,'' {\em IEEE Trans. Inform. Theory},
  vol.~56, pp.~4539--4551, Sep. 2010.

\bibitem{WDR07}
Y.~Wu, A.~G. Dimakis, and K.~Ramchandran, ``Deterministic regenerating codes
  for distributed storage,'' in {\em Proc. Allerton Conference on Control,
  Computing and Communication}, 2007.

\bibitem{RSKR09}
K.~Rashmi, N.~B. Shah, P.~V. Kumar, and K.~Ramchandran, ``Explicit construction
  of optimal exact regenerating codes for distributed storage,'' in {\em
  Allerton Conference on Control, Computing, and Communication}, 2009.

\bibitem{RSKKisit}
K.~V. Rashmi, N.~B. Shah, P.~V. Kumar, and K.~Ramchandran, ``Explicit and
  optimal exact-regenerating codes for the minimum-bandwidth point in
  distributed storage,'' in {\em Int. Sym. on Inf. Th. (ISIT'10)}, 2010.

\bibitem{SR10}
C.~Suh and K.~Ramchandran, ``On the existence of optimal exact-repair {MDS}
  codes for distributed storage,'' tech. rep., 2010.

\bibitem{CJM10}
V.~R. Cadambe, S.~A. Jafar, and H.~Maleki, ``Distributed data storage with
  minimum storage regenerating codes - exact and functional repair are
  asymptotically equally efficient,'' in {\em arXiv:1004.4299v1 [cs.IT]}, 2010.

\bibitem{WD09}
Y.~Wu and A.~G. Dimakis, ``Reducing repair traffic for erasure coding-based
  storage via interference alignment,'' in {\em IEEE Internat. Symp. Inform.
  Th.}, 2009.

\bibitem{SR09}
C.~Suh and K.~Ramchandran, ``Exact regeneration codes for distributed storage
  repair using interference alignment,'' in {\em Proc. IEEE Intl Symp. on
  Information Theory (ISIT)}, 2010.

\bibitem{SRKR10}
N.~B. Shah, K.~V. Rashmi, P.~V. Kumar, and K.~Ramchandran, ``Explicit codes
  minimizing repair bandwidth for distributed storage,'' in {\em ITW}, 2010.

\bibitem{V09}
V.~Venkatesan, ``Fast rebuilds in distributed storage systems using network
  coding,'' tech. rep., IBM Research GmbH, Zurich Research Laboratory, 2009.

\bibitem{PRR10}
S.~Pawar, S.~{E}l Rouayheb, and K.~Ramchandran, ``On secure distributed data
  storage under repair dynamics,'' in {\em Int. Sym. on Inf. Th. (ISIT'10)},
  2010.

\bibitem{PRR10journal}
S.~Pawar, S.~{E}l Rouayheb, and K.~Ramchandran, ``Securing dynamic distributed
  storage systems against eavesdropping and adversarial attacks,'' {\em
  submitted to Special Issue of the IEEE Trans. on Inf. Th. (Facets of Coding
  Theory)}, 2010.

\bibitem{Mcwilliams}
F.~J. MacWilliams and N.~J.~A. Sloane, {\em The Theory of Error-Correcting
  Codes}.
\newblock North Holland, June 1988.

\bibitem{W99}
N.~C. Wormald, ``Models of random regular graphs,'' {\em London Mathematical
  Society Lecture Note Series}, vol.~267, pp.~239--298, 1999.

\bibitem{KV03}
J.~Kim and V.~H. Vu, ``Generating random regular graphs,'' in {\em Proceedings
  of the thirty-fifth annual ACM symposium on Theory of computing}, (San Diego,
  CA, USA), pp.~213--222, 2003.

\bibitem{GFS}
S.~Ghemawat, H.~Gobioff, and S.-T. Leung, ``The {G}oogle file system,'' in {\em
  19th ACM Symposium on Operating Systems Principles}, 2003.

\bibitem{Wilson}
R.~M. Wilson, ``An existence theory for pairwise balanced designs: {III}- proof
  of the existence conjectures,'' {\em J. Comb. Theory}, vol.~18A, pp.~71--79,
  1975.

\bibitem{HandbookComb}
C.~J. Colbourn and J.~H. Denitz, {\em Handbook of Combinatorial Designs, Second
  Edition}.
\newblock Chapman and Hall/CRC, 2006.

\bibitem{DesignTh}
C.~C. Lindner and C.~A. Rodger, {\em Design Theory, Second Edition}.
\newblock Chapman and Hall/CRC, 2008.

\bibitem{webfullversion}
S.~{E}l Rouayheb and K.~Ramchandran, ``Fractional repetition codes for repair
  in distributed storage systems (extended version).'' under preparation.

\end{thebibliography}

\end{document}